\documentclass[journal]{IEEEtran}

\usepackage{amsmath}
\usepackage{amsfonts}
\usepackage{amssymb}
\usepackage{amsthm}
\usepackage{tabularx}
\usepackage{mathrsfs} 

\usepackage{url}
\usepackage{hyperref}

\newtheorem{Proposition}{Proposition}

\newtheorem{Observation}{Observation}

\usepackage{stfloats}
\usepackage{float}
\usepackage{graphicx}
\hyphenation{op-tical net-works semi-conduc-tor}
\usepackage{cite}
\usepackage{xcolor}
\usepackage{subfigure}

\makeatletter
\def\blfootnote{\xdef\@thefnmark{}\@footnotetext}
\makeatother

\begin{document}
	
		\title{\LARGE{Physical Layer Security over Fluid Reconfigurable Intelligent Surface-assisted Communication Systems}} 
	\author{Masoud~Kaveh,~\IEEEmembership{Member},~\textit{IEEE},~Farshad~Rostami~Ghadi,~\IEEEmembership{Member},~\textit{IEEE}, Francisco Hernando-Gallego, ~Diego~Mart\'in,
		Kai-Kit~Wong,~\IEEEmembership{Fellow},~\textit{IEEE}, and Riku Jäntti,~\IEEEmembership{Senior Member},~\textit{IEEE}
	}
	\maketitle
	\begin{abstract}
	This letter investigates the secrecy performance of wireless communication systems assisted by a fluid reconfigurable intelligent surface (FRIS). Unlike conventional reconfigurable intelligent surfaces (RISs) with fixed geometries, FRISs dynamically select a subset of reflective elements based on real-time channel conditions, offering enhanced spatial diversity and adaptability. Using this foundation, we model a secure downlink scenario where a base station communicates with a legitimate user in the presence of an eavesdropper, and the propagation is assisted by a FRIS with a limited number of elements set to the ON state. We analyze the system’s secrecy performance under spatial correlation by deriving analytical lower and upper bounds for the secrecy outage probability (SOP) and average secrecy capacity (ASC), respectively. Our results demonstrate that FRIS effectively enables secure communication under spatial correlation. Even with partial activation, FRIS significantly outperforms conventional RISs in enhancing secrecy performance under varying deployment densities and element correlations.
	\end{abstract}
	\begin{IEEEkeywords}
		Fluid reconfigurable intelligent surfaces, secrecy outage probability, average secrecy capacity.
	\end{IEEEkeywords}
	\maketitle
\blfootnote{The work of M. Kaveh and R. Jäntti has received funding from the SNS JU under the EU’s Horizon Europe Research and Innovation Programme under Grant Agreement No. 101192113 (Ambient-6G).
The work of F. Rostami Ghadi was supported by the EU’s Horizon 2022 Research and Innovation Programme through Marie Skłodowska-Curie under Grant 101107993. The work of K. K. Wong is supported by the Engineering and Physical Sciences Research Council (EPSRC) under Grant EP/W026813/1.}
\blfootnote{\noindent M. Kaveh and R. Jäntti are with the Department of Information and Communication Engineering, Aalto University, Espoo, Finland. (e-mail: $\rm masoud.kaveh@aalto.fi, riku.jantti@aalto.fi$).}
\blfootnote{\noindent F. Rostami Ghadi is with the Department of Signal Theory, Networking and Communications, 
University of Granada, Granada, Spain. (e-mail: $\rm f.rostami@ugr.es$).}
\blfootnote{\noindent K. K. Wong is with the Department of Electronic and Electrical
Engineering, University College London, London, U.K., the Yonsei Frontier Lab, Yonsei University, Seoul, South Korea. (e-mail: $\rm kai\text{-}kit.wong@ucl.ac.uk$)}
\blfootnote{\noindent F. H. Gallego and D. Mart\'in are with the Department of Applied Mathematics and Computer Science, respectively, 
Escuela de Ingenier\'ia Inform\'atica de Segovia, 
Universidad de Valladolid, Segovia, Spain (e-mail:$\rm \{fhernando, diego.martin.andres\}@uva.es$)}
	
	\section{Introduction}\label{sec-intro}
\IEEEPARstart{R}{econfigurable} intelligent surfaces (RISs) have emerged as a promising potential to enhance spectral and energy efficiency in next-generation wireless networks \cite{liu2021ris}. By smartly altering the phase shifts of passive reflecting elements, RISs can reshape the propagation environment in a cost-effective manner \cite{Bilotti2024:SmartEM}. However, in practice, RIS performance faces several obstacles: a multiplicative path-loss from the cascaded double-fading link; channel estimation overhead that grows rapidly with the number of reflecting elements; and geometric rigidity that limits the exploitation of spatial diversity when physical space is abundant or irregularly shaped, ultimately constraining performance \cite{elmossallamy2020ris}.

To overcome the spatial rigidity of conventional RISs, recent research has explored integrating the concept of fluid antenna systems (FAS) \cite{Wong2021:FAS,Fluid-survey} into the metasurface, whereas the former was originally proposed to reconfigure the shape and position of antennas on the user side \cite{Lu2025:FluidAntennas}, and has shown great potential to enhance spatial adaptability in various wireless applications \cite{Ghadi2024:FAS_BC}. Building on this idea, the synergy between FAS and RIS was first examined in joint system models \cite{Ghadi2024:RIS_FAS, Ghadi2025:RIS_FAS_PLS}. 
This progression has then naturally inspired \emph{porting the fluid principle onto the metasurface itself}, giving rise to the concept of fluid RIS (FRIS) \cite{Salem2025:FRIS_FirstLook}, where each reflecting unit behaves as a fluid subelement whose location can be reconfigured within a predefined aperture \cite{Xiao2025:FRIS_Joint}, thus exploiting spatial diversity and improving the robustness of the link. Additionally, the FRIS concept has recently evolved into the fluid integrated reflecting and emitting surface (FIRES), enabling 360$^\circ$ coverage by supporting both reflection and transmission \cite{Ghadi2025:FIRES}. Furthermore, \cite{Ghadi2025:FRIS_Arxiv} developed analytical frameworks for ergodic capacity and outage probability in FRIS-assisted links, showing substantial performance gains over conventional RISs. 

Despite recent advances in FRIS, its potential to enhance the physical layer security (PLS) of wireless communication systems remains unaddressed; specifically, the ability of FRIS to secure transmissions against eavesdropping has not been studied yet.
Motivated by this gap, this paper presents an early analysis on the secrecy performance of a downlink wireless system where a base station communicates with a legitimate user via an FRIS in the presence of a passive eavesdropper.
In particular, based on the optimized FRIS framework introduced in \cite{Xiao2025:FRIS_Joint,Ghadi2025:FRIS_Arxiv}, we employ moment matching methods to derive closed-form expressions for the probability density function (PDF) and the cumulative distribution function (CDF) of the equivalent channel gains observed both in legitimate receiver and eavesdropper.
Subsequently, we derive analytical lower and upper bounds for the secrecy outage probability (SOP) and average secrecy capacity (ASC), respectively, and compare their behavior with that observed in conventional RIS-assisted systems. Our numerical results confirm that FRIS, even with partial activation ports, significantly enhances PLS performance compared to conventional RISs under various deployment configurations.

	\section{System Model}\label{sec-sys}
    
    \begin{figure}[t]
    \centering    \includegraphics[width=0.36\textwidth]{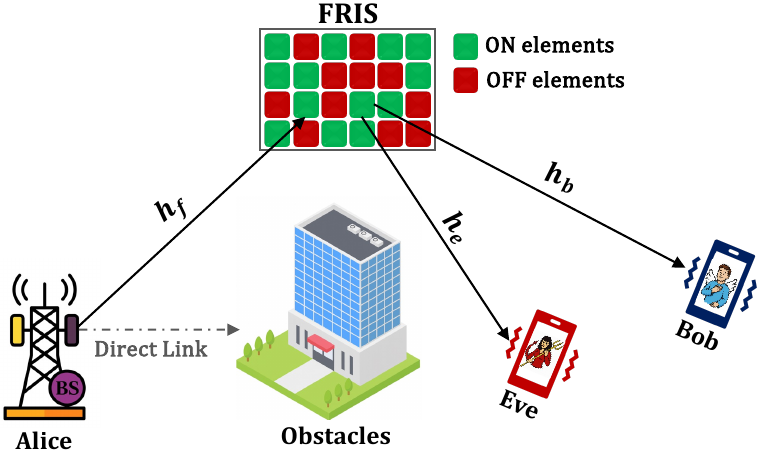}
    \caption{A secure FRIS-aided communication system with ON/OFF elements.}
    \label{fig_model}
\end{figure}
	We consider a secure wireless communication setup, as shown in Fig.~\ref{fig_model}, where a legitimate transmitter (Alice) aims to send confidential messages to a legitimate receiver (Bob) with the help of FRIS, while an eavesdropper (Eve) attempts to decode the desired signal. We assume that the direct links between Alice and nodes $u\in\left\{b,e\right\}$, which respectively represents Bob and Eve, are blocked due to obstacles. It is assumed that all nodes are equipped with fixed-position antennas, whereas the FRIS includes $M=M_x\times M_z$ reflecting elements, operating in one of two discrete modes ON or OFF. In the ON mode, the element responds to the incident electromagnetic wave, modifying its phase to contribute to the system's operation. When set to OFF, however, the element is connected to a matched load, effectively disconnecting it from the wave and ensuring no impact on the signal. The FRIS elements are arranged uniformly across a surface with dimensions defined as $W = W_x\lambda \times W_z\lambda$, where $\lambda$ represents the carrier wavelength. Each row contains $M_x$ elements spanning a total length of $W_x\lambda$, and each column consists of $M_z$ elements extending over $W_z\lambda$. Due to the close proximity of adjacent elements, spatial correlation effects cannot be ignored. To model this, we employ the Jakes' correlation model under rich scattering conditions. Hence, the spatial correlation coefficient between any two elements $i$ and $j$ is given by
\begin{align}
	j_{i,l} = \mathcal{J}_0\left(\frac{2\pi d_{i,l}}{\lambda}\right),
\end{align}
	where $\mathcal{J}_0(\cdot)$ denotes the zeroth-order Bessel function of the first kind, and $d_{i,l}$ represents the Euclidean distance between the two elements. The calculation of $d_{i,l}$ depends on the indexing convention used to map the one-dimensional (1D) index of an element to its two-dimensional (2D) coordinates. Therefore, $d^2_{i,l}$ can be defined as 
		\begin{multline}
		d^2_{i,l} = d_x^2\left(\mathrm{mod}\left(i,M_x\right)-\mathrm{mod}\left(l,M_x\right)\right)^2\\
		+d_z^2\left(\left\lfloor\frac{i}{M_x}\right\rfloor-\left\lfloor\frac{l}{M_x}\right\rfloor \right),
	\end{multline}
		where $d_x=\frac{W_x\lambda}{M_x}$ and $d_z=\frac{W_z\lambda}{M_z}$ are the physical inter-element spacing in each FRIS row and column, respectively. Consequently, the spatial correlation matrix $\mathbf{J}$ is given by
	\begin{align}
		\mathbf{J} = \begin{bmatrix}
			j_{1,1} & j_{1,2} & \cdots & j_{1,M} \\
			j_{2,1} & j_{2,2} & \cdots & j_{2,M} \\
			\vdots & \vdots & \ddots & \vdots \\
			j_{M,1} & j_{M,2} & \cdots & j_{M,M} \\
		\end{bmatrix}.
	\end{align}

Furthermore, we assume Alice has access to the instantaneous channel state information (CSI) of Bob but has no knowledge of the Eve’s CSI. Specifically, the FRIS element selection and phase shift configuration are designed solely based on Bob’s CSI to enhance the quality of the legitimate link. Consequently, the resulting FRIS configuration is optimized to maximize the signal power at Bob and is fixed during the transmission. Since Eve's location and channel conditions are unknown to the system, she observes the same FRIS configuration as Bob but from an unintended spatial direction. As a result, the equivalent channel observed by Eve is not beamformed or aligned to her channel, and the corresponding channel response appears statistically random from her perspective. 
	
Assuming that $M_\mathrm{ON}$ elements will be turned ON to modulate the incident signal, the received signal at the node $u\in\left\{b,e\right\}$ can be mathematically expressed as follows:
\begin{align}
	y_\mathrm{u} = \sqrt{PL_\mathrm{f} L_\mathrm{u}}\underset{H_{\mathrm{eq},u}}{\underbrace{\mathbf{h}_{u}^H\mathbf{J}^{\frac{1}{2}}\mathbf{S}^T_{M_\mathrm{ON}}\mathbf{\Phi}\mathbf{S}_{M_\mathrm{ON}}\mathbf{J}^{\frac{1}{2}}\mathbf{h}_\mathrm{f}}}x+z_\mathrm{u},\label{eq-y}
\end{align}
where $L_\mathrm{k} = \rho d_\mathrm{k}^{-\alpha}$ for $k \in \{f, u\}$ characterizes the large-scale path loss, where $\rho$ is the reference signal gain at a distance of one meter, $\alpha$ denotes the path-loss exponent, and $d_\mathrm{k}$ corresponds to the propagation distances across the Alice-to-FRIS, FRIS-to-Bob, and FRIS-to-Eve links. The transmit power from Alice is indicated by $P$, while $x$ is the data signal being transmitted. At the receiving node $u$, the noise $z_\mathrm{u}$ is modeled as circularly symmetric complex Gaussian noise with zero mean and variance $\sigma^2$, i.e., $z_\mathrm{u} \sim \mathcal{CN}(0, \sigma^2)$. The equivalent fading channel $H_\mathrm{eq}$ includes the channel vector $\mathbf{h}_\mathrm{k} \in \mathbb{C}^{M \times 1}$, whose elements are i.i.d. complex Gaussian random variables with zero mean and unit variance, i.e., $\mathbf{h}_\mathrm{k} \sim \mathcal{CN}(0, \mathbf{I}_M)$, modeling small-scale fading over all links. The matrix $\mathbf{\Phi} = \mathrm{diag}([\mathrm{e}^{j\phi_1}, \ldots, \mathrm{e}^{j\phi_M}]) \in \mathbb{C}^{M \times M}$ specifies the configurable phase shifts of the FRIS elements. The element selection is defined by $\mathbf{S}_{M_\mathrm{ON}}^T = [\mathbf{s}_1, \ldots, \mathbf{s}_{M_\mathrm{ON}}] \in \mathbb{R}^{M \times M_\mathrm{ON}}$, where each $\mathbf{s}_{m_\mathrm{ON}}$ for $m_\mathrm{ON} \in \mathcal{M}_\mathrm{ON} = \{1, \ldots, M_\mathrm{ON}\}$ denotes a selected column from the $M \times M$ identity matrix.

\section{Secrecy Performance Analysis}
This section first derives the PDF and CDF of the equivalent channel gains at Bob and Eve, then uses them to obtain closed-form results for SOP and ASC.

\subsection{Statistical Characterization}
From
\eqref{eq-y}, the received SNR at node $u$ is given by
\begin{align}
	\gamma_\mathrm{u}=\bar{\gamma}_\mathrm{u}L_\mathrm{f}L_\mathrm{u}\left|H_{\mathrm{eq},u}\right|^2,\label{eq-snr}
\end{align}
where $\bar{\gamma}_\mathrm{u} \triangleq P/\sigma^2_\mathrm{u}$ denotes the average transmit SNR, and $G_\mathrm{u}$ represents the equivalent channel power gain for the corresponding links, defined as
\begin{align}
G_\mathrm{u}\triangleq\left|H_{u,\mathrm{eq}}\right|^2= \left|\mathbf{h}_{u}^H\mathbf{J}^{\frac{1}{2}}\mathbf{S}^T_{M_\mathrm{ON}}\mathbf{\Phi}\mathbf{S}_{M_\mathrm{ON}}\mathbf{J}^{\frac{1}{2}}\mathbf{h}_\mathrm{f}\right|^2.
\end{align}
 
\begin{Proposition}\label{pro-1}
	The distribution of $G_\mathrm{b}$ is accurately approximated by the Gamma distribution, i.e., $G_\mathrm{b}\sim\mathrm{Gamma}\left(k_\mathrm{b},\theta_\mathrm{b}\right)$, with the following CDF and PDF:
	\begin{align}
		F_{G_\mathrm{b}}\left(g\right) = \frac{1}{\Gamma\left(k_\mathrm{b}\right)}\Upsilon\left(k_\mathrm{b},\frac{g}{\theta_\mathrm{b}}\right),\label{eq-cdf}
	\end{align} 
	\begin{align}
		f_{G_\mathrm{b}}\left(g\right) = \frac{1}{\theta_\mathrm{b}^{k_\mathrm{b}}\Gamma\left(k_\mathrm{b}\right)}g^{k_\mathrm{b}-1}\mathrm{e}^{-g/\theta_\mathrm{b}},\label{eq-pdf}
	\end{align}
	where $\Upsilon(a,b)$ is the lower incomplete gamma function, $k_\mathrm{b} =\frac{\left(\mathrm{tr}\left(\widetilde{\mathbf{J}}^2\right)\right)^2}{\mathrm{tr}\left(\widetilde{\mathbf{J}}^4\right)}$, $\theta_\mathrm{b} = \frac{\mathrm{tr}\left(\widetilde{\mathbf{J}}^4\right)}{\mathrm{tr}\left(\widetilde{\mathbf{J}}^2\right)}$, and $\widetilde{\mathbf{J}}=\mathbf{S}_{M_\mathrm{ON}} \mathbf{J} \mathbf{S}_{M_\mathrm{ON}}^T$.
\end{Proposition}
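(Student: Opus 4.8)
The plan is to characterize $G_\mathrm{b}$ via moment matching against a Gamma law, which requires computing its first two moments (or equivalently, identifying the exact distributional form and then matching). First I would exploit the structure of the optimized FRIS configuration. Since Alice designs $\mathbf{\Phi}$ and $\mathbf{S}_{M_\mathrm{ON}}$ using Bob's CSI to maximize signal power at Bob, the phase shifts can be chosen to co-phase the cascaded channel. Writing $\mathbf{g} \triangleq \mathbf{S}_{M_\mathrm{ON}}\mathbf{J}^{1/2}\mathbf{h}_u$ and $\mathbf{t} \triangleq \mathbf{S}_{M_\mathrm{ON}}\mathbf{J}^{1/2}\mathbf{h}_\mathrm{f}$ (both $M_\mathrm{ON}\times 1$ vectors), we have $H_{\mathrm{eq},u} = \mathbf{g}^H\mathbf{\Phi}_{M_\mathrm{ON}}\mathbf{t}$ where $\mathbf{\Phi}_{M_\mathrm{ON}}$ is the $M_\mathrm{ON}\times M_\mathrm{ON}$ submatrix of active phases. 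With optimal phase alignment for Bob, $|H_{\mathrm{eq},b}| = \sum_{m} |g_m||t_m|$, but under the correlation $\mathbf{J}^{1/2}$ the entries are no longer independent, so I would instead work with the effective correlated Gaussian vectors directly. The key is that $G_\mathrm{b}$ is, up to the coherent-combining gain, governed by a quadratic form in Gaussian variables whose covariance is built from $\widetilde{\mathbf{J}} = \mathbf{S}_{M_\mathrm{ON}}\mathbf{J}\mathbf{S}_{M_\mathrm{ON}}^T$.

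Next I would compute $\mathbb{E}[G_\mathrm{b}]$ and $\mathbb{E}[G_\mathrm{b}^2]$ in terms of traces of powers of $\widetilde{\mathbf{J}}$. Diagonalizing $\widetilde{\mathbf{J}} = \mathbf{U}\mathbf{\Lambda}\mathbf{U}^H$ with eigenvalues $\{\lambda_i\}$, the coherently combined gain behaves like a weighted sum $\sum_i \lambda_i |\tilde h_i|^2$-type structure whose relevant moments involve $\sum_i \lambda_i^2 = \mathrm{tr}(\widetilde{\mathbf{J}}^2)$ and $\sum_i \lambda_i^4 = \mathrm{tr}(\widetilde{\mathbf{J}}^4)$; the appearance of \emph{fourth} powers (rather than the first and second) is the tell-tale sign that the underlying quantity is itself a product/squared-magnitude of a quadratic form, so that matching $\mathbb{E}[G_\mathrm{b}] \propto \mathrm{tr}(\widetilde{\mathbf{J}}^2)$ and $\mathrm{Var}(G_\mathrm{b}) \propto \mathrm{tr}(\widetilde{\mathbf{J}}^4)$ is the natural route. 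Setting the Gamma parameters by $k_\mathrm{b}\theta_\mathrm{b} = \mathbb{E}[G_\mathrm{b}]$ and $k_\mathrm{b}\theta_\mathrm{b}^2 = \mathrm{Var}(G_\mathrm{b})$ then yields $k_\mathrm{b} = (\mathrm{tr}(\widetilde{\mathbf{J}}^2))^2/\mathrm{tr}(\widetilde{\mathbf{J}}^4)$ and $\theta_\mathrm{b} = \mathrm{tr}(\widetilde{\mathbf{J}}^4)/\mathrm{tr}(\widetilde{\mathbf{J}}^2)$, after which the stated CDF and PDF follow from the definition of the Gamma distribution with $\Upsilon(\cdot,\cdot)$ its lower incomplete gamma function.

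The main obstacle I anticipate is making the moment computation rigorous given the double layer of correlation and the selection-then-optimize structure: one must carefully track how the optimal phase design interacts with $\mathbf{J}^{1/2}$ on \emph{both} sides of $\mathbf{\Phi}$, and justify that the resulting combined gain's moments depend on $\widetilde{\mathbf{J}}$ only through $\mathrm{tr}(\widetilde{\mathbf{J}}^2)$ and $\mathrm{tr}(\widetilde{\mathbf{J}}^4)$. Evaluating $\mathbb{E}[G_\mathrm{b}^2]$ will require fourth-order Gaussian moment identities (Isserlis/Wick), and some care is needed because the two channel vectors $\mathbf{h}_u$ and $\mathbf{h}_\mathrm{f}$ are independent but each is passed through the same $\mathbf{J}^{1/2}$, so cross terms must be bookkept correctly. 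I would emphasize that the Gamma fit is an \emph{approximation} validated by moment matching (as the proposition states "accurately approximated"), so the proof's goal is to derive the matched parameters rather than to establish an exact distributional identity; a short remark on why the Gamma family is a reasonable surrogate (nonnegativity, correct tail behavior for sums of correlated exponential-type terms) would round out the argument.
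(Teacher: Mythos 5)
Your moment-matching endgame (set $k_\mathrm{b}\theta_\mathrm{b}=\mathbb{E}[G_\mathrm{b}]$, $k_\mathrm{b}\theta_\mathrm{b}^2=\mathrm{Var}(G_\mathrm{b})$) is the same as the paper's, but the route you take to the moments has a genuine gap, and in one place it heads in a direction that contradicts the stated parameters. The paper's proof does \emph{not} exploit coherent phase alignment at all: it writes $G_\mathrm{b}=|\mathbf{h}_\mathrm{b}^H\mathbf{A}\mathbf{h}_\mathrm{f}|^2$ with $\mathbf{A}=\mathbf{J}^{1/2}\mathbf{S}_{M_\mathrm{ON}}^T\boldsymbol{\Phi}\mathbf{S}_{M_\mathrm{ON}}\mathbf{J}^{1/2}$ treated as a fixed matrix, observes that $\mathbf{g}=\mathbf{A}\mathbf{h}_\mathrm{f}\sim\mathcal{CN}(\mathbf{0},\mathbf{A}\mathbf{A}^H)$ so that $G_\mathrm{b}$ is a Gaussian quadratic/bilinear form whose moments are $\mathbb{E}[G_\mathrm{b}]=\operatorname{tr}(\mathbf{A}\mathbf{A}^H)$ and $\mathbb{E}[G_\mathrm{b}^2]=\operatorname{tr}^2(\mathbf{A}\mathbf{A}^H)+\operatorname{tr}((\mathbf{A}\mathbf{A}^H)^2)$, and then uses the cyclic trace, $\mathbf{S}_{M_\mathrm{ON}}\mathbf{S}_{M_\mathrm{ON}}^T=\mathbf{I}_{M_\mathrm{ON}}$, and the unit-modulus diagonal structure of $\boldsymbol{\Phi}$ to collapse these to $\operatorname{tr}(\widetilde{\mathbf{J}}^2)$ and $\operatorname{tr}(\widetilde{\mathbf{J}}^4)$. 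That last algebraic reduction is the entire content of the proposition's parameter formulas, and it is absent from your proposal: you assert that the moments ``involve'' $\sum_i\lambda_i^2$ and $\sum_i\lambda_i^4$ by pattern-matching against the answer (``the appearance of fourth powers is the tell-tale sign\ldots''), which is reverse-engineering, not a derivation. A weighted sum $\sum_i\lambda_i|\tilde h_i|^2$ with $\lambda_i$ the eigenvalues of $\widetilde{\mathbf{J}}$ would give $k=\operatorname{tr}^2(\widetilde{\mathbf{J}})/\operatorname{tr}(\widetilde{\mathbf{J}}^2)$, not the stated $k_\mathrm{b}$; the second and fourth powers arise only because the relevant matrix is $\mathbf{A}\mathbf{A}^H$, whose spectrum is (via the trace identities) that of $\widetilde{\mathbf{J}}^2$. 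You need to exhibit that matrix and carry out the trace computation.

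The second problem is your opening move, $|H_{\mathrm{eq},b}|=\sum_m|g_m||t_m|$ under optimal co-phasing. If you actually pursued that model, the moments of $(\sum_m|g_m||t_m|)^2$ involve products of correlated Rayleigh magnitudes (with $\pi/4$-type constants) and do \emph{not} reduce to $\operatorname{tr}(\widetilde{\mathbf{J}}^2)$ and $\operatorname{tr}(\widetilde{\mathbf{J}}^4)$; a coherently combined gain would also scale differently in $M_\mathrm{ON}$ than the stated mean $k_\mathrm{b}\theta_\mathrm{b}=\operatorname{tr}(\widetilde{\mathbf{J}}^2)$. So this branch is not merely harder to make rigorous, it is inconsistent with the proposition, and the paper avoids it by modeling the optimized configuration as a deterministic $\mathbf{A}$ whose phase entries enter only through $|\mathrm{e}^{j\phi_m}|=1$ in the trace computation. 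You partially sense this and ``instead work with the effective correlated Gaussian vectors directly,'' but you never replace the abandoned step with the concrete one that is needed. To repair the proof: (i) define $\mathbf{A}$ and state that it is held fixed (configured from Bob's CSI, then frozen); (ii) compute $\mathbb{E}[G_\mathrm{b}]$ and $\mathbb{E}[G_\mathrm{b}^2]$ as Gaussian bilinear-form moments in terms of $\mathbf{A}\mathbf{A}^H$; (iii) prove $\operatorname{tr}(\mathbf{A}\mathbf{A}^H)=\operatorname{tr}(\widetilde{\mathbf{J}}^2)$ and $\operatorname{tr}((\mathbf{A}\mathbf{A}^H)^2)=\operatorname{tr}(\widetilde{\mathbf{J}}^4)$; then your moment-matching step goes through verbatim.
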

\begin{proof}The proof details are in Appendix \ref{app-1}. 
	\end{proof}
\begin{Proposition}\label{pro-2}
	The distribution of $G_\mathrm{e}$ is accurately approximated by the Exponential distribution, i.e., $G_\mathrm{e}\sim\mathrm{Exp}\left(\theta_\mathrm{e}\right)$, with the following CDF and PDF:
	\begin{align}
	F_{G_\mathrm{e}}\left(g\right) = 1-\mathrm{exp}\left(-\theta_\mathrm{e}g\right),
	\end{align}
		\begin{align} \label{eq-pdf-e}
		f_{G_\mathrm{e}}\left(g\right) = \theta_\mathrm{e}\mathrm{exp}\left(-\theta_\mathrm{e}g\right), 
	\end{align}
where $\theta_\mathrm{e}=\frac{1}{\mathrm{tr}\left(\widetilde{\mathbf{J}}^2\right)}$ and $\widetilde{\mathbf{J}}=\mathbf{S}_{M_\mathrm{ON}} \mathbf{J} \mathbf{S}_{M_\mathrm{ON}}^T$.
	\end{Proposition}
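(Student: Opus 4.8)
My plan is to follow the same moment-matching philosophy as in Proposition~\ref{pro-1}, but to exploit that the Exponential law is a \emph{one-parameter} family, so only the first moment of $G_\mathrm{e}$ needs to be matched. First I would write $G_\mathrm{e}=|H_{\mathrm{eq},e}|^2$ with $H_{\mathrm{eq},e}=\mathbf{h}_e^H\mathbf{J}^{1/2}\mathbf{S}_{M_\mathrm{ON}}^T\mathbf{\Phi}\mathbf{S}_{M_\mathrm{ON}}\mathbf{J}^{1/2}\mathbf{h}_f$ and introduce the projected channels $\mathbf{g}_e=\mathbf{S}_{M_\mathrm{ON}}\mathbf{J}^{1/2}\mathbf{h}_e$ and $\mathbf{g}_f=\mathbf{S}_{M_\mathrm{ON}}\mathbf{J}^{1/2}\mathbf{h}_f$, which are independent with $\mathbf{g}_k\sim\mathcal{CN}(\mathbf{0},\widetilde{\mathbf{J}})$ because $\mathbb{E}[\mathbf{g}_k\mathbf{g}_k^H]=\mathbf{S}_{M_\mathrm{ON}}\mathbf{J}\mathbf{S}_{M_\mathrm{ON}}^T=\widetilde{\mathbf{J}}$. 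The structural point is that, by the system assumptions, $\mathbf{\Phi}$ is tuned solely to Bob's cascaded channel and is statistically independent of $\mathbf{h}_e$; seen from an unintended direction it is merely a fixed unit-modulus diagonal that supplies no coherent array gain to Eve, so for the purpose of characterizing the law of $G_\mathrm{e}$ one may treat $H_{\mathrm{eq},e}$ as the unbeamformed cascaded gain $\mathbf{g}_e^H\mathbf{g}_f$.

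Second, I would establish the Exponential shape by conditioning. With $\mathbf{v}\triangleq\mathbf{J}^{1/2}\mathbf{S}_{M_\mathrm{ON}}^T\mathbf{\Phi}\mathbf{S}_{M_\mathrm{ON}}\mathbf{J}^{1/2}\mathbf{h}_f$ held fixed, $H_{\mathrm{eq},e}=\mathbf{h}_e^H\mathbf{v}$ is \emph{exactly} zero-mean complex Gaussian with variance $\|\mathbf{v}\|^2$, since $\mathbf{h}_e\sim\mathcal{CN}(\mathbf{0},\mathbf{I}_M)$ is independent of $\mathbf{v}$; hence $G_\mathrm{e}\mid\mathbf{v}$ is exactly $\mathrm{Exp}(1/\|\mathbf{v}\|^2)$. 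It then remains to argue that the conditional mean $\|\mathbf{v}\|^2$ concentrates around its expectation $\Omega_\mathrm{e}\triangleq\mathbb{E}[\|\mathbf{v}\|^2]=\mathbb{E}[G_\mathrm{e}]$, so that the resulting mixture collapses to a single Exponential; equivalently, since $H_{\mathrm{eq},e}$ is a sum of $M_\mathrm{ON}$ zero-mean, weakly dependent products of Gaussians, a central-limit argument makes it approximately $\mathcal{CN}(\mathbf{0},\Omega_\mathrm{e})$ and therefore $G_\mathrm{e}$ approximately $\mathrm{Exp}(1/\Omega_\mathrm{e})$. This is the ``accurately approximated'' assertion of the statement.

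Third, I would fix the parameter by evaluating $\Omega_\mathrm{e}$. Using the representation $H_{\mathrm{eq},e}\approx\mathbf{g}_e^H\mathbf{g}_f$ with $\mathbf{g}_e$ and $\mathbf{g}_f$ independent and each $\mathcal{CN}(\mathbf{0},\widetilde{\mathbf{J}})$, and $|\mathbf{g}_e^H\mathbf{g}_f|^2=\mathrm{tr}(\mathbf{g}_e\mathbf{g}_e^H\mathbf{g}_f\mathbf{g}_f^H)$, I obtain $\Omega_\mathrm{e}=\mathrm{tr}\!\left(\mathbb{E}[\mathbf{g}_e\mathbf{g}_e^H]\,\mathbb{E}[\mathbf{g}_f\mathbf{g}_f^H]\right)=\mathrm{tr}(\widetilde{\mathbf{J}}\,\widetilde{\mathbf{J}})=\mathrm{tr}(\widetilde{\mathbf{J}}^2)$, which is consistent with $\mathbb{E}[G_\mathrm{b}]=k_\mathrm{b}\theta_\mathrm{b}=\mathrm{tr}(\widetilde{\mathbf{J}}^2)$ from Proposition~\ref{pro-1}: Bob and Eve see the same mean channel power, but Bob's is concentrated by coherent combining while Eve's carries the full Exponential spread. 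Matching $\Omega_\mathrm{e}$ to the mean $1/\theta_\mathrm{e}$ of an Exponential variable gives $\theta_\mathrm{e}=1/\mathrm{tr}(\widetilde{\mathbf{J}}^2)$, and the CDF and PDF in \eqref{eq-pdf-e} follow immediately.

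The main obstacle is not the moment evaluation, which is routine, but justifying the two approximations that underpin the Exponential model: (i) that the Bob-optimized phase matrix $\mathbf{\Phi}$ may be treated as statistically neutral for Eve, so that $\mathbb{E}[G_\mathrm{e}]$ equals the unbeamformed value $\mathrm{tr}(\widetilde{\mathbf{J}}^2)$ rather than acquiring the off-diagonal correction terms (of order $\sum_{i\neq j}[\widetilde{\mathbf{J}}]_{ij}^2$) that a per-element phase choice would otherwise induce; and (ii) that $\|\mathbf{v}\|^2$ is concentrated enough --- equivalently $H_{\mathrm{eq},e}$ Gaussian enough --- for the conditional-Exponential mixture to be well represented by one Exponential. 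Both are governed by $M_\mathrm{ON}$ and by the spread of the eigenvalues of $\widetilde{\mathbf{J}}$; in fact the relative variance of $\|\mathbf{v}\|^2$ behaves like $\mathrm{tr}(\widetilde{\mathbf{J}}^4)/(\mathrm{tr}(\widetilde{\mathbf{J}}^2))^2=1/k_\mathrm{b}$, so the approximation is tightest in the weakly-correlated, many-active-element regime where I would expect the numerical validation to be performed.
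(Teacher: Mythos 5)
Your proposal is correct and, at its core, follows the same route as the paper: a one--moment match of $G_\mathrm{e}$ to an exponential law, with the parameter fixed by $\mathbb{E}[G_\mathrm{e}]=\operatorname{tr}\big(\mathbf{A}\mathbf{A}^H\big)=\operatorname{tr}\big(\widetilde{\mathbf{J}}^2\big)$ computed exactly as you do (independence of $\mathbf{h}_\mathrm{e}$ and $\mathbf{h}_\mathrm{f}$, identity covariances, cyclic trace). Where you genuinely diverge is in how the exponential \emph{shape} is justified. The paper conditions on $\mathbf{h}_\mathrm{e}$, writes $G_\mathrm{e}=\mathbf{h}_\mathrm{f}^H\mathbf{B}_\mathrm{e}\mathbf{h}_\mathrm{f}$ with the rank-one matrix $\mathbf{B}_\mathrm{e}=\mathbf{A}^H\mathbf{h}_\mathrm{e}\mathbf{h}_\mathrm{e}^H\mathbf{A}$, and then invokes a CLT-type heuristic for large $M_\mathrm{ON}$. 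You condition the other way, on $\mathbf{v}=\mathbf{A}\mathbf{h}_\mathrm{f}$, which makes $G_\mathrm{e}\mid\mathbf{v}$ \emph{exactly} $\mathrm{Exp}(1/\|\mathbf{v}\|^2)$ and reduces the whole approximation to a concentration statement for $\|\mathbf{v}\|^2$; your identification of the relative variance of $\|\mathbf{v}\|^2$ with $\operatorname{tr}(\widetilde{\mathbf{J}}^4)/\big(\operatorname{tr}(\widetilde{\mathbf{J}}^2)\big)^2=1/k_\mathrm{b}$ is a sharper, quantitative version of the paper's informal argument and nicely ties the accuracy of Proposition~2 to the same parameter that governs Proposition~1. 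Your caveat (i) about the phase matrix is also well placed rather than a defect of your argument: the paper's own step $(c)$, $\operatorname{tr}\big(\widetilde{\boldsymbol{\Phi}}\widetilde{\mathbf{J}}\widetilde{\boldsymbol{\Phi}}^H\widetilde{\mathbf{J}}\big)=\operatorname{tr}\big(\widetilde{\mathbf{J}}^2\big)$, evaluates for a fixed unit-modulus diagonal to $\sum_{i,l}[\widetilde{\mathbf{J}}]_{il}^2\,\mathrm{e}^{\mathrm{j}(\phi_i-\phi_l)}$, so the off-diagonal correction you flag is implicitly absorbed into the same ``$\boldsymbol{\Phi}$ is statistically neutral for Eve'' assumption that the paper leaves informal. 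In short: same destination and same moment computation, but your conditional-exponential-plus-concentration argument is the cleaner of the two, and you are more explicit about the approximations being made.
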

\begin{proof}The proof details are in Appendix \ref{app-2}. 
	\end{proof}

\subsection{ASC Analysis}
Since the received SNRs at Bob and Eve are random variables, the instantaneous secrecy capacity also becomes a random quantity. Therefore, the average secrecy capacity (ASC), is defined as the expected value of the instantaneous secrecy rate, i.e., 
\begin{align} \label{sec-def}
    C_\mathrm{s}(\gamma_\mathrm{b}, \gamma_\mathrm{e}) = \left[ \log_2(1+\gamma_\mathrm{b}) - \log_2(1+\gamma_\mathrm{e}) \right]^+,
\end{align}
in which $[\cdot]^+ \triangleq \max(0, \cdot)$ ensures that the secrecy capacity is non-negative. 
Using the PDF distributions derived in \eqref{eq-pdf} and \eqref{eq-pdf-e}, the ASC for the considered FRIS-assisted system is expressed as
\begin{align}
    \bar{C}_\mathrm{s} \triangleq \int_0^\infty \int_0^\infty C_\mathrm{s}(\gamma_\mathrm{b}, \gamma_\mathrm{e}) f_{\gamma_\mathrm{b}}(\gamma_\mathrm{b}) f_{\gamma_\mathrm{e}}(\gamma_\mathrm{e}) \,\mathrm{d}\gamma_\mathrm{b} \mathrm{d}\gamma_\mathrm{e}.
\end{align}
However, solving the above-mentioned integral is mathematically intractable. Hence, we adopt a bounding approach to obtain analytical insights.
\begin{Observation}\label{Pro_ASC}
The upper bound of the ASC for the considered FRIS-assisted system can be approximated as
\begin{align} \label{ASC pro}
    \bar{C}_\mathrm{s}^{\mathrm{U}} = \log_2\left( \frac{1 + \bar{\gamma}_\mathrm{b} L_\mathrm{f} L_\mathrm{b} k_\mathrm{b} \theta_\mathrm{b}}{1 + \bar{\gamma}_\mathrm{e} L_\mathrm{f} L_\mathrm{e}/{\theta_\mathrm{e}}}\right).
\end{align}
\end{Observation}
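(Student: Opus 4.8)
The plan is to work from the definition in \eqref{sec-def} and replace the intractable double integral by a moment-matched Jensen bound. Writing $C_\mathrm{u}\triangleq\log_2(1+\gamma_\mathrm{u})$, we have $\bar{C}_\mathrm{s}=\mathbb{E}\big[[C_\mathrm{b}-C_\mathrm{e}]^+\big]$. The first step is to handle the $[\cdot]^+$ operator: on the event $\{\gamma_\mathrm{b}>\gamma_\mathrm{e}\}$, which dominates the secrecy regime of interest, $[C_\mathrm{b}-C_\mathrm{e}]^+=C_\mathrm{b}-C_\mathrm{e}$ exactly, so one writes $\bar{C}_\mathrm{s}\lesssim\mathbb{E}[C_\mathrm{b}]-\mathbb{E}[C_\mathrm{e}]$. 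The second step is to invoke Jensen's inequality for the concave map $x\mapsto\log_2(1+x)$, which gives $\mathbb{E}[C_\mathrm{b}]\le\log_2(1+\mathbb{E}[\gamma_\mathrm{b}])$; treating the eavesdropper term in the same spirit collapses the right-hand side to $\log_2\!\big((1+\mathbb{E}[\gamma_\mathrm{b}])/(1+\mathbb{E}[\gamma_\mathrm{e}])\big)$.

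What then remains is purely a first-moment computation. From \eqref{eq-snr}, $\mathbb{E}[\gamma_\mathrm{u}]=\bar{\gamma}_\mathrm{u}L_\mathrm{f}L_\mathrm{u}\,\mathbb{E}[G_\mathrm{u}]$, so only the means of the equivalent channel gains are needed. By Proposition~\ref{pro-1}, $G_\mathrm{b}\sim\mathrm{Gamma}(k_\mathrm{b},\theta_\mathrm{b})$ has mean $k_\mathrm{b}\theta_\mathrm{b}$; by Proposition~\ref{pro-2}, $G_\mathrm{e}\sim\mathrm{Exp}(\theta_\mathrm{e})$ with the density in \eqref{eq-pdf-e} has mean $1/\theta_\mathrm{e}$. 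Substituting $\mathbb{E}[\gamma_\mathrm{b}]=\bar{\gamma}_\mathrm{b}L_\mathrm{f}L_\mathrm{b}k_\mathrm{b}\theta_\mathrm{b}$ and $\mathbb{E}[\gamma_\mathrm{e}]=\bar{\gamma}_\mathrm{e}L_\mathrm{f}L_\mathrm{e}/\theta_\mathrm{e}$ into the expression above reproduces \eqref{ASC pro}. As an internal consistency check, $k_\mathrm{b}\theta_\mathrm{b}=\mathrm{tr}(\widetilde{\mathbf{J}}^2)=1/\theta_\mathrm{e}$, so the two equivalent gains share the same mean, which is consistent with Eve observing an unaligned copy of Bob's configuration.

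The delicate point — and the step I expect to draw the most scrutiny — is the claim that \eqref{ASC pro} genuinely \emph{upper}-bounds $\bar{C}_\mathrm{s}$. Jensen's inequality tightens the legitimate-rate term in the correct direction, but inside a difference it loosens the eavesdropper term, and the $[\cdot]^+$ clipping itself can only raise $\bar{C}_\mathrm{s}$ relative to the unclipped difference; strictly speaking, therefore, \eqref{ASC pro} is a high-SNR / $\gamma_\mathrm{b}\gg\gamma_\mathrm{e}$ approximation that tracks an upper envelope rather than a bound provable for all parameter choices — which matches the ``Observation'' and ``can be approximated'' phrasing. If a rigorous one-sided statement were demanded, I would retreat to the provable but looser $\bar{C}_\mathrm{s}\le\log_2\!\big(1+\bar{\gamma}_\mathrm{b}L_\mathrm{f}L_\mathrm{b}k_\mathrm{b}\theta_\mathrm{b}\big)$ (pure Jensen on $C_\mathrm{b}$, discarding Eve) and then justify the denominator factor as a refinement validated numerically in the simulation section.
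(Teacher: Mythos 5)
Your proposal is correct and follows essentially the same route as the paper's own proof: drop the $[\cdot]^+$, apply Jensen's inequality to $\log_2(1+x)$, and substitute the first moments $\mathbb{E}[\gamma_\mathrm{b}]=\bar{\gamma}_\mathrm{b}L_\mathrm{f}L_\mathrm{b}k_\mathrm{b}\theta_\mathrm{b}$ and $\mathbb{E}[\gamma_\mathrm{e}]=\bar{\gamma}_\mathrm{e}L_\mathrm{f}L_\mathrm{e}/\theta_\mathrm{e}$. Your closing caveat --- that Jensen loosens the eavesdropper term in the wrong direction and that discarding the clipping lowers rather than raises the expression, so the result is an approximation rather than a provable one-sided bound --- is accurate and is a point the paper's appendix silently glosses over.
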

\begin{proof}
The proof details are in Appendix \ref{app_ASC}. 
\end{proof}

\subsection{SOP Analysis}
Given that the FRIS configuration is optimized based solely on Bob’s CSI and remains fixed during transmission, the random nature of the channel observed by Eve makes the SOP a meaningful and tractable metric, which quantifies the likelihood that the instantaneous secrecy capacity falls below a predefined target rate $R_\text{s}$, that is, $P_{\mathrm{sop}}=\Pr\left(C_\text{s}\le R_\text{s}\right)$. 
Now, by inserting \eqref{sec-def} into the SOP definition we have
\begin{align} 
P_\mathrm{sop}&=\Pr\left(\log_2\left(\frac{1+\gamma_\mathrm{b}}{1+\gamma_\mathrm{e}}\right)\le R_\mathrm{s}\right) 
\end{align}
In this regard, we present a closed-form approximation of the SOP lower bound by leveraging the statistical distributions of the equivalent channel gains at both Bob and Eve.
\begin{Observation} \label{Pro_SOP}
A closed-form lower bound for the SOP of the considered FRIS-assisted system is given by
\begin{equation} \label{Pro-SOP}
P_\mathrm{SOP}^{\mathrm{L}} = \frac{ \bar{\gamma}_\mathrm{b} \, L_\mathrm{b} \, \theta_\mathrm{b} \, \theta_\mathrm{e} }{ \bar{\gamma}_\mathrm{e} \, L_\mathrm{e} \, 2^{R_\mathrm{s}} \, \Gamma(k_\mathrm{b}) } \, G_{2,2}^{2,1} \left( \frac{ \bar{\gamma}_\mathrm{b} \, L_\mathrm{b} \, \theta_\mathrm{b} \, \theta_\mathrm{e} }{ \bar{\gamma}_\mathrm{e} \, L_\mathrm{e} \, 2^{R_\mathrm{s}} } \;\middle|\; \begin{matrix} -k_\mathrm{b},\, 0 \\ 0,\, -1 \end{matrix} \right),
\end{equation}
where $G_{p,q}^{m,n}(\cdot)$ and $\Gamma(\cdot)$ denote the Meijer G-function and gamma function, respectively.
\end{Observation}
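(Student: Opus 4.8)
The plan is to invoke the standard high-rate relaxation of the secrecy-outage event. From $P_\mathrm{sop}=\Pr\!\left(\gamma_\mathrm{b}\le 2^{R_\mathrm{s}}(1+\gamma_\mathrm{e})-1\right)$ and $R_\mathrm{s}\ge 0$ we have $2^{R_\mathrm{s}}(1+\gamma_\mathrm{e})-1=2^{R_\mathrm{s}}\gamma_\mathrm{e}+(2^{R_\mathrm{s}}-1)\ge 2^{R_\mathrm{s}}\gamma_\mathrm{e}$, so the event $\{\gamma_\mathrm{b}\le 2^{R_\mathrm{s}}\gamma_\mathrm{e}\}$ is contained in the outage event and
\begin{align}
P_\mathrm{sop}\ge\Pr\!\left(\gamma_\mathrm{b}\le 2^{R_\mathrm{s}}\gamma_\mathrm{e}\right)\triangleq P_\mathrm{SOP}^{\mathrm{L}}.
\end{align}
This strips off the additive unit terms and leaves a pure SNR ratio whose law is governed by Propositions~\ref{pro-1} and~\ref{pro-2}.

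Next I would pass to the channel power gains. Inserting $\gamma_\mathrm{b}=\bar\gamma_\mathrm{b}L_\mathrm{f}L_\mathrm{b}G_\mathrm{b}$ and $\gamma_\mathrm{e}=\bar\gamma_\mathrm{e}L_\mathrm{f}L_\mathrm{e}G_\mathrm{e}$ from \eqref{eq-snr}, the shared factor $L_\mathrm{f}$ cancels and the event becomes $\{G_\mathrm{b}\le c\,G_\mathrm{e}\}$ with $c\triangleq \tfrac{2^{R_\mathrm{s}}\bar\gamma_\mathrm{e}L_\mathrm{e}}{\bar\gamma_\mathrm{b}L_\mathrm{b}}$. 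Conditioning on $G_\mathrm{e}$ and averaging (treating $G_\mathrm{b}$ and $G_\mathrm{e}$ as independent, as in the ASC integral), and substituting the Gamma CDF \eqref{eq-cdf} and the exponential PDF \eqref{eq-pdf-e},
\begin{align}
P_\mathrm{SOP}^{\mathrm{L}}=\int_0^\infty F_{G_\mathrm{b}}\!\left(c\,g\right)f_{G_\mathrm{e}}(g)\,\mathrm{d}g=\frac{\theta_\mathrm{e}}{\Gamma(k_\mathrm{b})}\int_0^\infty \Upsilon\!\left(k_\mathrm{b},\tfrac{c\,g}{\theta_\mathrm{b}}\right)\mathrm{e}^{-\theta_\mathrm{e}g}\,\mathrm{d}g.
\end{align}

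The remaining task is to close this single integral. I would write the lower incomplete gamma as a Meijer $G$-function, $\Upsilon(k_\mathrm{b},x)=G_{1,2}^{1,1}\!\left(x\,\middle|\,\begin{matrix}1\\ k_\mathrm{b},\,0\end{matrix}\right)$, and apply the Laplace-type identity $\int_0^\infty \mathrm{e}^{-\alpha x}G_{p,q}^{m,n}\!\left(\omega x\,\middle|\,\cdot\right)\mathrm{d}x=\tfrac{1}{\alpha}G_{p+1,q}^{m,n+1}\!\left(\tfrac{\omega}{\alpha}\,\middle|\,\cdot\right)$ (with the parameter $0$ prepended to the upper list), giving $P_\mathrm{SOP}^{\mathrm{L}}=\tfrac{1}{\Gamma(k_\mathrm{b})}\,G_{2,2}^{1,2}\!\left(\tfrac{1}{\mu}\,\middle|\,\begin{matrix}0,\,1\\ k_\mathrm{b},\,0\end{matrix}\right)$ with $\mu\triangleq\tfrac{\bar\gamma_\mathrm{b}L_\mathrm{b}\theta_\mathrm{b}\theta_\mathrm{e}}{\bar\gamma_\mathrm{e}L_\mathrm{e}2^{R_\mathrm{s}}}$. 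The Meijer-$G$ reflection identity $G_{p,q}^{m,n}(z\,|\,\mathbf a;\mathbf b)=G_{q,p}^{n,m}(z^{-1}\,|\,1-\mathbf b;1-\mathbf a)$ and the argument-shift identity $z^{\sigma}G_{p,q}^{m,n}(z\,|\,\mathbf a;\mathbf b)=G_{p,q}^{m,n}(z\,|\,\mathbf a+\sigma;\mathbf b+\sigma)$ (with $\sigma=-1$) then turn this into $\tfrac{\mu}{\Gamma(k_\mathrm{b})}\,G_{2,2}^{2,1}\!\left(\mu\,\middle|\,\begin{matrix}-k_\mathrm{b},\,0\\ 0,\,-1\end{matrix}\right)$, which is exactly \eqref{Pro-SOP}. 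As a consistency check, the integral can also be evaluated elementarily through $\int_0^\infty \mathrm{e}^{-\alpha x}\Upsilon(k_\mathrm{b},\beta x)\,\mathrm{d}x=\tfrac{\beta^{k_\mathrm{b}}\Gamma(k_\mathrm{b})}{\alpha(\alpha+\beta)^{k_\mathrm{b}}}$, which collapses the bound to the compact form $(1+\mu)^{-k_\mathrm{b}}$; this must agree with the Meijer-$G$ expression.

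The main obstacle is purely the special-function bookkeeping in the last step: matching the Meijer-$G$ orders, the parameter vectors, and the outer constant/argument to \eqref{Pro-SOP} exactly, and verifying the accompanying convergence conditions (it suffices that $k_\mathrm{b}>0$, which holds by construction). The probabilistic ingredients — the containment bound and the conditioning — are routine.
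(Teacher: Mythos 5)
Your proposal is correct and follows essentially the same route as the paper: both start from the integral $\int_0^\infty F_{\gamma_\mathrm{b}}(2^{R_\mathrm{s}}\gamma_\mathrm{e})f_{\gamma_\mathrm{e}}(\gamma_\mathrm{e})\,\mathrm{d}\gamma_\mathrm{e}$, insert the Gamma CDF and exponential PDF from Propositions~\ref{pro-1} and~\ref{pro-2}, and evaluate via Meijer $G$-function identities (the paper invokes the product identity \cite[Eq.~2.24.1.1]{ref59} with the exponential also cast as $G^{1,0}_{0,1}$, which is equivalent to your Laplace-transform identity followed by the reflection and shift steps). Your two additions — the explicit containment argument justifying why discarding the $2^{R_\mathrm{s}}-1$ term yields a lower bound, and the elementary sanity check that the result collapses to $(1+\mu)^{-k_\mathrm{b}}$ — are both correct and consistent with \eqref{Pro-SOP}, but do not change the underlying approach.
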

\begin{proof}
The proof details are in Appendix \ref{app_SOP}. 
\end{proof}

\section{Numerical Results}\label{sec-num}

To validate the secrecy performance of the FRIS-aided communication system, Monte Carlo simulations are conducted with \(10^6 \) independent realizations. The carrier frequency is set to \( f_c = 2.4\,\mathrm{GHz} \), corresponding to a wavelength \( \lambda = c/f_c \). The FRIS is configured as a square array comprising \( M_\mathrm{x} = M_\mathrm{z} = 20 \) elements per row and column, resulting in a total of \( M = 400 \) elements. However, only \( M_\mathrm{ON} \leq M \) elements are adaptively selected. The inter-element spacing for FRIS is determined by a physical aperture of \( 3\lambda \) in both horizontal and vertical dimensions. For comparison, a conventional RIS with an equal number of elements (\( M_\mathrm{conv}=M_\mathrm{ON} \)) is deployed with a uniform square geometry and a fixed interelement distance of \( \lambda/2 \). 
The distances BS-to-FRIS, FRIS-to-Bob, and FRIS-to-Eve are set to 20\,m, 30\,m, and 30\,m, respectively, with a path-loss exponent \( \alpha = 2.5 \) and reference gain \( \rho = 1 \).
The secrecy rate threshold is set to \( R_\mathrm{s} = 1\,\mathrm{bps/Hz} \). The transmission power is set to 30\,dBm, while the noise power in Bob and Eve is $\sigma^2_\mathrm{b}=-90$ dBm and $\sigma^2_\mathrm{e}=-80$ dBm, respectively. All channels are modeled as spatially correlated Rayleigh fading.

\begin{figure*}[t]
\centering
    \begin{minipage}[b]{0.30\linewidth}
        \centering
        \raggedleft
\includegraphics[width=\linewidth]{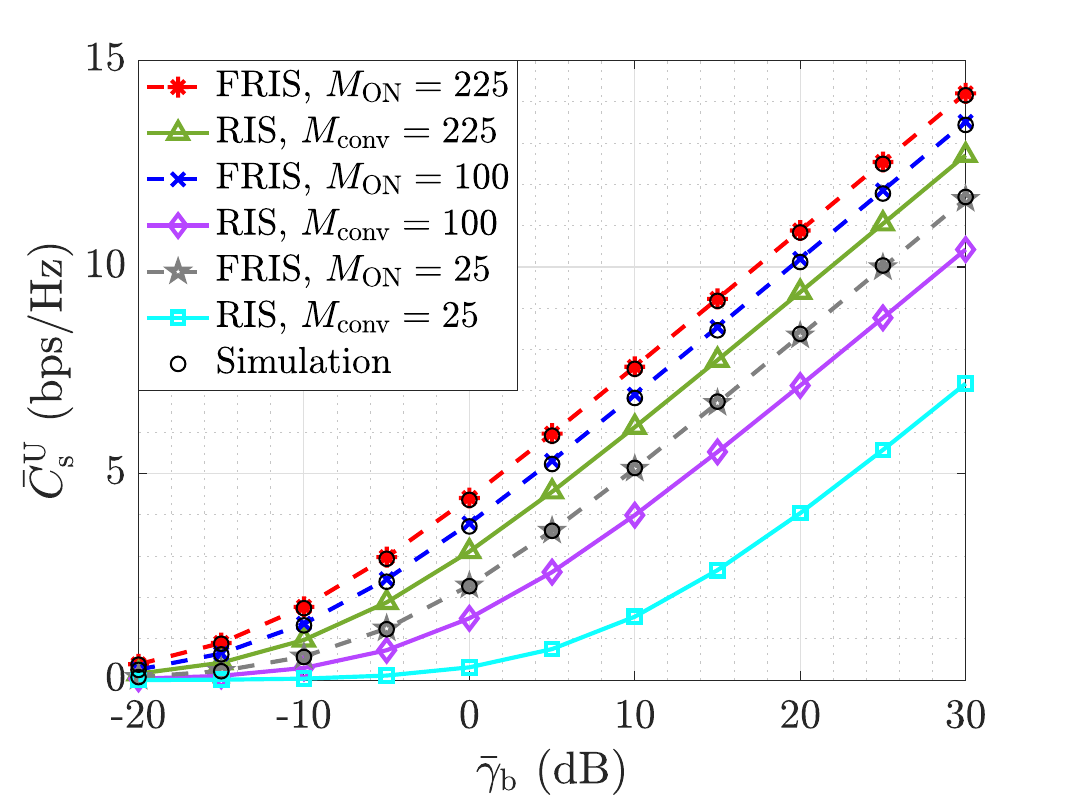}
\vspace{-15pt}
    \caption{ASC versus Bob's average SNR for different $M_\text{ON}$ and $M_\text{conv}$.}
    \label{fig-ASC-M}
    \end{minipage}
    \hfill
    \begin{minipage}[b]{0.30\linewidth}
        \centering
\includegraphics[width=\linewidth]{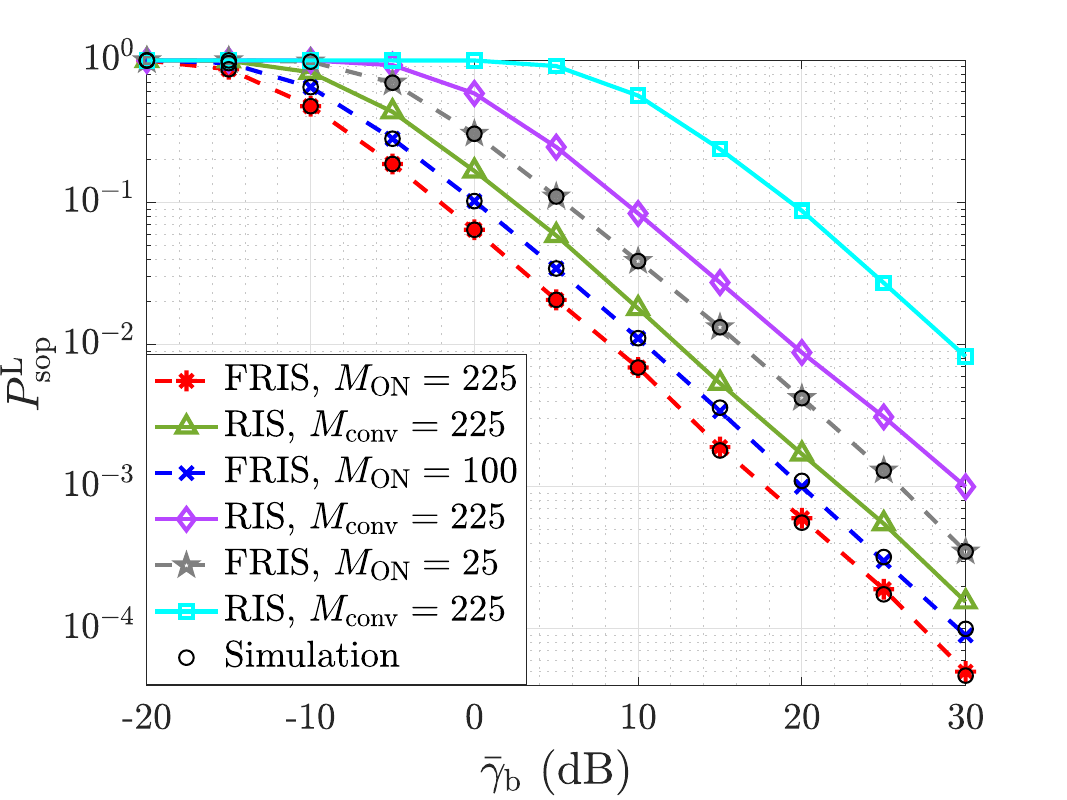}
\vspace{-15pt}
    \caption{SOP versus Bob's average SNR for different $M_\text{ON}$ and $M_\text{conv}$.}
    \label{fig-SOP-M}
   \end{minipage}
    \hfill
    \begin{minipage}[b]{0.30\linewidth}
       \raggedright
\includegraphics[width=\linewidth]{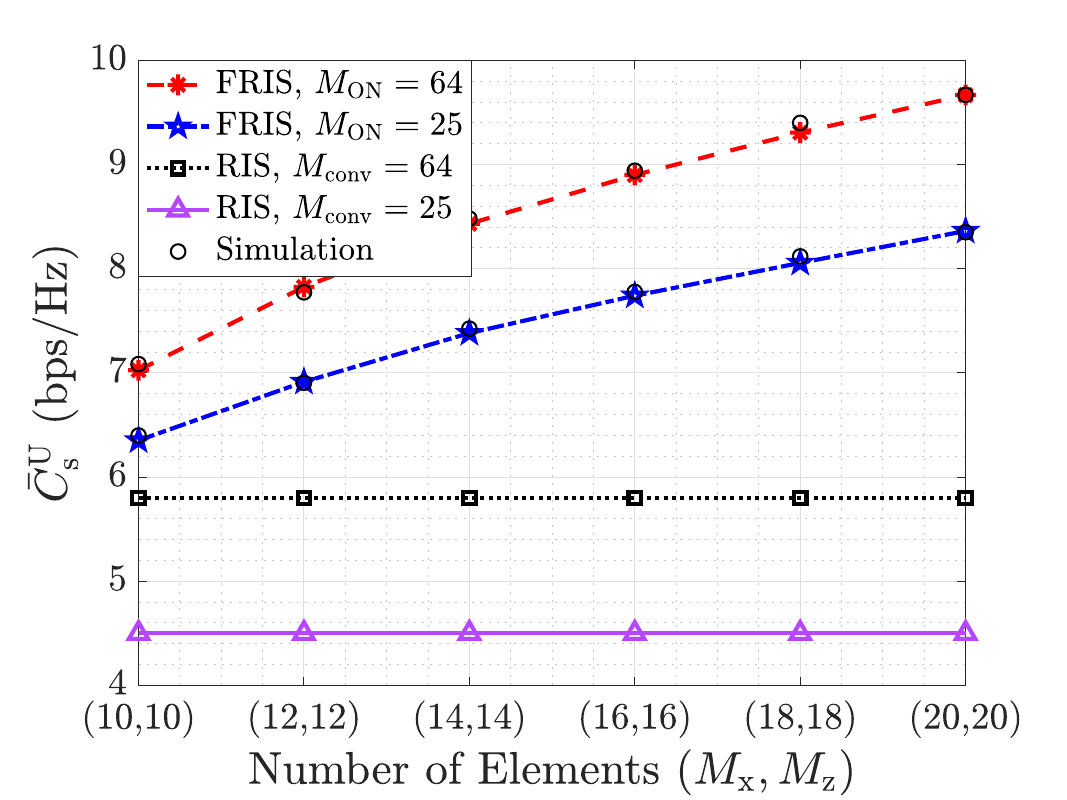}
\vspace{-15pt}
    \caption{ASC versus the number of elements $M = (M_\text{x}, M_\text{z} )$ for different $M_\text{ON}$.}
    \label{fig-ASC-Mx-Mz}
    \end{minipage}
\end{figure*}

Fig.~\ref{fig-ASC-M} shows the ASC versus Bob's average SNR $\bar{\gamma}_\mathrm{b}$ for different FRIS and RIS configurations.
The monotonic increase in the upper bound ASC with the average SNR at Bob is due to the Jensen-type bound, which increases in the mean of $\gamma_\mathrm{b}$. Under the proposed moment-matching, the equivalent channel gain at Bob is modeled as $G_\mathrm{b}\sim\mathrm{Gamma}(k_\mathrm{b},\theta_\mathrm{b})$ with
$k_\mathrm{b}$ and 
$\theta_\mathrm{b}$ shown in Proposition \ref{pro-1}, and
$\mathbb{E}[G_\mathrm{b}]=k_\mathrm{b}\theta_\mathrm{b}=\operatorname{tr}(\widetilde{\mathbf{J}}^{2})$.
For an FRIS, activating the best $M_\mathrm{ON}$ elements out of a larger pool increases $\operatorname{tr}(\widetilde{\mathbf{J}}^{2})$ while typically reducing the ratio $\operatorname{tr}(\widetilde{\mathbf{J}}^{4})/\big(\operatorname{tr}(\widetilde{\mathbf{J}}^{2})\big)^{2}$; this simultaneously raises the mean SNR $\mathbb{E}[\gamma_\mathrm{b}]$ and the shape factor $k_\mathrm{b}$, thus boosting $\bar{C}_s^{\mathrm{U}}$ and reducing dispersion. Conventional RIS with $M_\mathrm{conv}$ lacks this order-statistics (selection) gain and suffers more from spatial correlation, hence its lower curves. At low SNR, all schemes break down due to noise dominance while at moderate/high SNR the gap between FRIS and RIS widens. 

The impact of $\bar{\gamma}_\mathrm{b}$ on the SOP performance for different FRIS and RIS configurations is illustrated in Fig.~\ref{fig-SOP-M}.
Using the lower-bound SOP (representation in \eqref{eq:SOP_L_def})
with $G_\mathrm{b}\!\sim\!\mathrm{Gamma}(k_\mathrm{b},\theta_\mathrm{b})$ and $G_\mathrm{e}\!\sim\!\mathrm{Exp}(\theta_\mathrm{e})$, the high-$\bar{\gamma}_\mathrm{b}$ behavior is dominated by the left tail of $F_{\gamma_\mathrm{b}}(\cdot)$, yielding $
P_\mathrm{sop}^{\mathrm{L}}\propto \bar{\gamma}_\mathrm{b}^{-k_\mathrm{b}}$, for fixed $R_s$ and Eve statistics.
Thus larger $k_\mathrm{b}$ produces a steeper decay. FRIS increases $k_\mathrm{b}$ via element selection (lower $\operatorname{tr}(\widetilde{\mathbf{J}}^{4})$ relative to $\big(\operatorname{tr}(\widetilde{\mathbf{J}}^{2})\big)^2$), so its curves exhibit both a better coding gain (leftward shift from larger $\mathbb{E}[\gamma_\mathrm{b}]$) and a higher diversity order (steeper slope). A conventional RIS with fixed $M_\mathrm{conv}$ offers no selection freedom; its stronger correlation and smaller $k_\mathrm{b}$ lead to noticeably flatter decay. The close match between markers and the analysis validates the Gamma/Exponential approximation for Bob/Eve channels.

Fig.~\ref{fig-ASC-Mx-Mz} shows the ASC versus the number of total metasurface elements $M=(M_x,M_z)$ for fixed $M_\mathrm{ON}$.
With $M_\mathrm{ON}$ fixed, FRIS still exhibits increasing $\bar{C}_s^{\mathrm{U}}$ as the total surface element $M$ grows because the controller selects the best $M_\mathrm{ON}$ out of a larger candidate set; an order-statistics gain. As $M$ increases, the selected subset tends to have larger $\operatorname{tr}(\widetilde{\mathbf{J}}^{2})$ and smaller $\operatorname{tr}(\widetilde{\mathbf{J}}^{4})/\big(\operatorname{tr}(\widetilde{\mathbf{J}}^{2})\big)^{2}$, which raises both $\mathbb{E}[\gamma_\mathrm{b}]$ and $k_\mathrm{b}$. Although denser packing can elevate correlation locally, FRIS mitigates this by preferentially choosing stronger and more spatially separated elements (implicit de-correlation through selection). In contrast, the conventional RIS curves remain essentially flat because $M_\mathrm{conv}$ is fixed: without selection the effective statistics of $G_\mathrm{b}$ do not improve with $M$. The higher trajectory for larger $M_\mathrm{ON}$ reflects the additional array/selection gain at Bob and is consistent with the trends observed in Figs.~\ref{fig-ASC-M}–\ref{fig-ASC-Mx-Mz}.

\section{Conclusion}\label{sec-con}
This letter investigated the PLS performance of FRIS-assisted downlink transmission. By employing moment matching, we characterized the equivalent gains (Gamma for Bob and exponential for Eve), from which we derived a closed-form lower bound on the SOP and a tight Jensen-type upper bound on the ASC. Monte Carlo simulations validated the analysis and demonstrated that, even with partial activation, FRIS consistently outperforms conventional RIS in PLS performance with the same number of active elements. Moreover, the performance gains increase with $M_\mathrm{ON}$ and with aperture size (at fixed $M_\mathrm{ON}$), owing to selection/order-statistics and an effective increase in the diversity parameter $k_\mathrm{b}$. 

\appendices
\section{Proof of Proposition \ref{pro-1}}\label{app-1}
Since we assumed that the FRIS is configured based on the instantaneous CSI of Bob, and remains fixed throughout the transmission, the element selection matrix $\mathbf{S}_{M_\mathrm{ON}}$ and phase shift matrix $\boldsymbol{\Phi}$ are optimized to enhance the received signal power at Bob. Under this deterministic and optimized configuration, the equivalent channel gain at Bob is given by $G_\mathrm{b} = \left|\mathbf{h}_\mathrm{b}^H \mathbf{A} \mathbf{h}_\mathrm{f}\right|^2$, where $\mathbf{A} \triangleq \mathbf{J}^{1/2} \mathbf{S}_{M_\mathrm{ON}}^T \boldsymbol{\Phi} \mathbf{S}_{M_\mathrm{ON}} \mathbf{J}^{1/2}$ denotes the effective FRIS reflection matrix. Next, following the approach proposed in \cite{refR1}, the transformed vector $\mathbf{g} = \mathbf{A} \mathbf{h}_\mathrm{f}$ is distributed as $\mathcal{CN}(0, \mathbf{A}\mathbf{A}^H)$, and since $\mathbf{h}_\mathrm{b}$ and $\mathbf{h}_\mathrm{f}$ are independent, $G_\mathrm{b}$ becomes a Hermitian quadratic form in complex Gaussian vectors. Consequently, $G_\mathrm{b}$ follows a generalized chi-squared distribution, i.e., a weighted sum of independent exponential random variables.

To facilitate tractable analysis, $G_\mathrm{b}$ is approximated by a Gamma distribution using moment matching \cite{Ghadi2025:FRIS_Arxiv}. The first and second moments are computed as $\mathbb{E}[G_\mathrm{b}] = \operatorname{tr}(\mathbf{A} \mathbf{A}^H)$ and $\mathbb{E}[G_\mathrm{b}^2] = \operatorname{tr}^2(\mathbf{A} \mathbf{A}^H) + \operatorname{tr}((\mathbf{A} \mathbf{A}^H)^2)$, respectively. By leveraging the structure of $\mathbf{A}$, and as shown in \cite{refR1}, it holds that $\operatorname{tr}(\mathbf{A} \mathbf{A}^H) = \operatorname{tr}(\widetilde{\mathbf{J}}^2)$ and $\operatorname{tr}((\mathbf{A} \mathbf{A}^H)^2) = \operatorname{tr}(\widetilde{\mathbf{J}}^4)$, where $\widetilde{\mathbf{J}} = \mathbf{S}_{M_\mathrm{ON}} \mathbf{J} \mathbf{S}_{M_\mathrm{ON}}^T$ is the reduced spatial correlation matrix. Therefore, the Gamma shape and scale parameters are  given by $
k = \frac{\left(\operatorname{tr}(\widetilde{\mathbf{J}}^2)\right)^2}{\operatorname{tr}(\widetilde{\mathbf{J}}^4)}$ and $
\theta = \frac{\operatorname{tr}(\widetilde{\mathbf{J}}^4)}{\operatorname{tr}(\widetilde{\mathbf{J}}^2)}
$, and the proof is completed. 

\section{Proof of Proposition \ref{pro-2}}\label{app-2}
We now analyze the equivalent channel gain at the Eve as $
G_\mathrm{e} = \left|\mathbf{h}_\mathrm{e}^H \mathbf{A} \mathbf{h}_\mathrm{f} \right|^2,
$, where $\mathbf{h}_\mathrm{e} \sim \mathcal{CN}(\mathbf{0}, \mathbf{I}_M)$ denote the complex Gaussian channel vector from the FRIS to Eve. It should be mentioned that the matrix $\mathbf{A} \in \mathbb{C}^{M \times M}$ denotes the FRIS configuration (element selection and phase shifts), which is optimized solely based on Bob's CSI. Hence, $\mathbf{A} = \mathbf{J}^{1/2} \mathbf{S}_{M_\mathrm{ON}}^T \boldsymbol{\Phi} \mathbf{S}_{M_\mathrm{ON}} \mathbf{J}^{1/2}$, where $\boldsymbol{\Phi}$ and $\mathbf{S}_{M_\mathrm{ON}}$ are designed based on the instantaneous CSI of Bob and fixed during transmission. In contrast to Bob, Eve does not benefit from this optimization, and the BS is assumed to have no knowledge of Eve's channel. Consequently, although Eve observes the same physical configuration $\mathbf{A}$, from her perspective, this matrix is statistically random due to its dependence on the unknown CSI of Bob. Therefore, $\mathbf{A}$ is treated as a random matrix from Eve's point of view, with statistical behavior shaped by the ensemble of possible configurations optimized for Bob.

Now, to characterize the distribution of $G_\mathrm{e}$, we note that it can be written as a double-sided quadratic form:
\begin{align}
G_\mathrm{e} = \mathbf{h}_\mathrm{f}^H \mathbf{A}^{H} \mathbf{h}_\mathrm{e} \mathbf{h}_\mathrm{e}^H \mathbf{A} \mathbf{h}_\mathrm{f} = \mathbf{h}_\mathrm{f}^H \mathbf{B}_\mathrm{e} \mathbf{h}_\mathrm{f},
\end{align}
where $\mathbf{B}_\mathrm{e} \triangleq \mathbf{A}^{H} \mathbf{h}_\mathrm{e} \mathbf{h}_\mathrm{e}^H \mathbf{A}$ is a rank-one Hermitian positive semi-definite matrix. Conditioned on $\mathbf{h}_\mathrm{e}$, the equivalent gain $G_\mathrm{e}$ becomes a Hermitian quadratic form in a complex Gaussian vector and thus follows a generalized chi-squared distribution. However, since $\mathbf{A}$ is not deterministic from Eve's perspective, the matrix $\mathbf{B}_\mathrm{e}$ is itself a random quantity. Moreover, due to the independence between $\mathbf{h}_\mathrm{e}$ and $\mathbf{h}_\mathrm{f}$, and under the assumption that the optimization of $\mathbf{A}$ introduces effectively random alignment from Eve's viewpoint, the equivalent channel vector $\mathbf{g}_\mathrm{e} = \mathbf{A} \mathbf{h}_\mathrm{f}$ behaves statistically like a random projection of $\mathbf{h}_\mathrm{f}$. As a result, $\mathbf{g}_\mathrm{e}$ is a zero-mean complex Gaussian vector with a random covariance matrix. Thus, $G_\mathrm{e} = |\mathbf{h}_\mathrm{e}^H \mathbf{g}_\mathrm{e}|^2$ becomes a random quadratic form of a Gaussian vector with a random coefficient matrix, making exact analytical characterization intractable. However, for sufficiently large $M_\mathrm{ON}$ under  random projections toward Eve, the central limit theorem (CLT)-like behavior justify the use of an exponential approximation, i.e., $
G_\mathrm{e} \sim \text{Exp}(\theta_\mathrm{e})$, where  $\theta_\mathrm{e} = 1/\mathbb{E}[G_\mathrm{e}]
$. Th expectation of $G$ can be expressed as
\begin{align}
\mathbb{E}[G_\mathrm{e}] &= \mathbb{E}_{\mathbf{h}_\mathrm{f}, \mathbf{h}_\mathrm{e}} \left[ \left| \mathbf{h}_\mathrm{e}^H \mathbf{A} \mathbf{h}_\mathrm{f} \right|^2 \right] \\
&\overset{(a)}{=} \mathbb{E}_{\mathbf{h}_\mathrm{e}} \left[ \mathbf{h}_\mathrm{e}^H \mathbb{E}_{\mathbf{h}_\mathrm{f}} [\mathbf{A} \mathbf{h}_\mathrm{f} \mathbf{h}_\mathrm{f}^H \mathbf{A}^H] \mathbf{h}_\mathrm{e} \right], 
\end{align}
where $(a)$ is followed by the statistical independence of $\mathbf{h}_\mathrm{f}$ and $\mathbf{h}_\mathrm{e}$. Since $\mathbf{h}_\mathrm{f} \sim \mathcal{CN}(0, \mathbf{I})$, we have $\mathbb{E}_{\mathbf{h}_\mathrm{f}}[\mathbf{h}_\mathrm{f} \mathbf{h}_\mathrm{f}^H] = \mathbf{I}$ and thus $\mathbb{E}[G_\mathrm{e}]$ is derived as
$$
\mathbb{E}[G_\mathrm{e}] 
= \operatorname{tr} \left( \mathbb{E}_{\mathbf{h}_\mathrm{e}}[\mathbf{h}_\mathrm{e}\mathbf{h}_\mathrm{e}^H] \mathbf{A} \mathbf{A}^H \right)
= \operatorname{tr} \left( \mathbb{E}[\mathbf{A} \mathbf{A}^H] \right).
$$
In practice, since $\mathbf{A} = \mathbf{J}^{1/2} \mathbf{S}_{M_\mathrm{ON}}^T \boldsymbol{\Phi} \mathbf{S}_{M_\mathrm{ON}} \mathbf{J}^{1/2}$, we can rewrite $\mathbb{E}[G_\mathrm{e}]$ as
\begin{align}
\mathbb{E}[G_\mathrm{e}] &= \mathrm{tr} \left( \mathbf{J}^{1/2} \mathbf{S}_{M_\mathrm{ON}}^T \boldsymbol{\Phi} \mathbf{S}_{M_\mathrm{ON}} \mathbf{J} \mathbf{S}_{M_\mathrm{ON}}^T \boldsymbol{\Phi}^H \mathbf{S}_{M_\mathrm{ON}} \mathbf{J}^{1/2} \right)\\
&\overset{(b)}{=} \mathrm{tr} \left( \widetilde{\boldsymbol{\Phi}} \widetilde{\mathbf{J}} \widetilde{\boldsymbol{\Phi}}^H \widetilde{\mathbf{J}} \right)\overset{(c)}{=}\mathrm{tr}\left(\widetilde{\mathbf{J}}^2\right),\label{eq-exp}
\end{align}
where $\widetilde{\mathbf{J}} = \mathbf{S}_{M_\mathrm{ON}} \mathbf{J} \mathbf{S}_{M_\mathrm{ON}}^T$, $(b)$ follows from the cyclic trace and orthogonality of $\mathbf{S}_{M_\mathrm{ON}}$, while $(c)$ holds for a diagonal with unit-magnitude entries of matrix $\boldsymbol{\Phi}$. Therefore, by applying \eqref{eq-exp} into $G_\mathrm{e} \sim \text{Exp}(\theta_\mathrm{e})$, the proof is complete.

\section{Proof of Observation \ref{Pro_ASC}}\label{app_ASC}

We begin with the definition of the ASC as the expectation of the instantaneous secrecy capacity:
\begin{equation}
    \bar{C}_s = \mathbb{E}\left[ \left[ \log_2(1 + \gamma_\mathrm{b}) - \log_2(1 + \gamma_\mathrm{e}) \right]^+ \right].
\end{equation}
Applying Jensen's inequality to the concave function $\log_2(1 + x)$ yields an upper bound:
\begin{equation} \label{eq-upper}
    \bar{C}_s \leq \bar{C}_s^\mathrm{U} = \log_2\left(1 + \mathbb{E}[\gamma_\mathrm{b}] \right) - \log_2\left(1 + \mathbb{E}[\gamma_\mathrm{e}] \right).
\end{equation}
To evaluate this, we use the expectations of $\gamma_\mathrm{b}$ and $\gamma_\mathrm{e}$ based on their distributions. Since $\gamma_\mathrm{b} \sim \text{Gamma}(k_\mathrm{b}, \theta_\mathrm{b})$, we have:
$
    \mathbb{E}[\gamma_\mathrm{b}] = \bar{\gamma}_\mathrm{b} L_\mathrm{f} L_\mathrm{b} k_\mathrm{b} \theta_\mathrm{b}.
$
Likewise, since $\gamma_\mathrm{e} \sim \text{Exp}(\theta_\mathrm{e})$, we obtain
$
    \mathbb{E}[\gamma_\mathrm{e}] = \frac{\bar{\gamma}_\mathrm{e} L_\mathrm{f} L_\mathrm{e}}{\theta_\mathrm{e}}.
$
Substituting these into the \eqref{eq-upper} gives:
\begin{equation}
    \bar{C}_s^\mathrm{U} = \log_2 \left( 1 + \bar{\gamma}_\mathrm{b} L_\mathrm{f} L_\mathrm{b} k_\mathrm{b} \theta_\mathrm{b} \right) - \log_2 \left( 1 + \frac{\bar{\gamma}_\mathrm{e} L_\mathrm{f} L_\mathrm{e}}{\theta_\mathrm{e}} \right).
\end{equation}
Combining the logarithms and simplifying, we obtain the final expression as shown in \eqref{ASC pro}, which completes the proof.

\section{Proof of Observation \ref{Pro_SOP}}\label{app_SOP}

We start from the definition of the lower bound SOP, 
as
\begin{equation}
    P_\mathrm{sop}^{\mathrm{L}} = \int_{0}^{\infty} F_{\gamma_\mathrm{b}}\left(2^{R_s} \gamma_\mathrm{e} \right) f_{\gamma_\mathrm{e}}(\gamma_\mathrm{e}) \, \mathrm{d}\gamma_\mathrm{e}. \label{eq:SOP_L_def}
\end{equation}
Substituting the expressions for the CDF of $\gamma_\mathrm{b}$ and the PDF of $\gamma_\mathrm{e}$ yields
\begin{equation}
    P_\mathrm{sop}^{\mathrm{L}} \hspace{-3pt}=\hspace{-2pt} \frac{\theta_\mathrm{e}}{\bar{\gamma}_\mathrm{e} L_\mathrm{f} L_\mathrm{e} \Gamma(k_\mathrm{b})}
    \hspace{-3pt}\int_{0}^{\infty} \hspace{-4pt}\Upsilon \hspace{-3pt}\left( \hspace{-3pt}k_\mathrm{b}, \frac{2^{R_s} \gamma_\mathrm{e}}{\bar{\gamma}_\mathrm{b} L_\mathrm{f} L_\mathrm{b} \theta_\mathrm{b}} \hspace{-3pt} \right)
    \hspace{-2pt}\exp \hspace{-3pt}\left( \hspace{-4pt}- \frac{\theta_\mathrm{e} \gamma_\mathrm{e}}{\bar{\gamma}_\mathrm{e} L_\mathrm{f} L_\mathrm{e}}\hspace{-3pt} \right)
   \hspace{-2pt} \mathrm{d}\gamma_\mathrm{e}. \label{eq:SOP_L_substituted}
\end{equation}
Next, we express the lower incomplete gamma and exponential functions in terms of the Meijer G-function using \cite[Eq. 8.4.16.1]{ref59} and  \cite[Eq. 8.4.3.1]{ref59}, respectively. 
Substituting both Meijer G-forms into the integral and applying the identity from \cite[Eq. 2.24.1.1]{ref59} gives:
\begin{align}
    \mathcal{I}_1 &= \int_{0}^{\infty} 
    G^{1,1}_{1,2} \left( \frac{2^{R_s} \gamma_\mathrm{e}}{\bar{\gamma}_\mathrm{b} L_\mathrm{f} L_\mathrm{b} \theta_\mathrm{b}} \, \bigg| \, \begin{matrix} 1 \\ k_\mathrm{b}, 0 \end{matrix} \right)
    G^{1,0}_{0,1} \left( \frac{\theta_\mathrm{e} \gamma_\mathrm{e}}{\bar{\gamma}_\mathrm{e} L_\mathrm{f} L_\mathrm{e}} \, \bigg| \, \begin{matrix} - \\ 0 \end{matrix} \right) \mathrm{d}\gamma_\mathrm{e} \notag \\
    &= \frac{\bar{\gamma}_\mathrm{b} L_\mathrm{f} L_\mathrm{b} \theta_\mathrm{b}}{2^{R_s}} 
    G^{2,1}_{2,2} \left( \frac{\bar{\gamma}_\mathrm{b} L_\mathrm{f} L_\mathrm{b} \theta_\mathrm{b} \theta_\mathrm{e}}{\bar{\gamma}_\mathrm{e} L_\mathrm{f} L_\mathrm{e} 2^{R_s}} \, \bigg| \, \begin{matrix} -k_\mathrm{b}, 0 \\ 0, -1 \end{matrix} \right).
\end{align}
Finally, substituting $\mathcal{I}_1$ back into \eqref{eq:SOP_L_substituted}, we obtain the closed-form expression as \eqref{Pro-SOP}, which completes the proof.


\end{document}